\documentclass[twocolumn]{IEEEtran}

\makeatletter
\let\IEEEproof\proof
\let\IEEEendproof\endproof
\let\proof\@undefined
\let\endproof\@undefined
\makeatother

\usepackage{report}

\newcommand{\intl}{\int\limits}
\def\bs{\hspace{-4pt}}

\let\proof\IEEEproof
\let\endproof\IEEEendproof

\usepackage{cite}
\begin{document}

\title{\Large\bfseries On the Finiteness of the Capacity of Continuous
  Channels\thanks{This work was supported by AUB's University Research
    Board (URB) and the Lebanese National Council for Scientific
    Research (LNCSR).}}

\author{
  \authorblockN{Jihad Fahs, Ibrahim Abou-Faycal} \\
  \authorblockA{Dept.\ of Elec.\ and Comp.\ Engineering,
    American University of Beirut \\
    Beirut 1107 2020, Lebanon \\
    {\tt jjf03@aub.edu.lb, iaf@alum.mit.edu}}}

\maketitle


\begin{abstract}
  Evaluating the channel capacity is one of many key problems in
  information theory. In this work we derive rather-mild sufficient
  conditions under which the capacity of continuous channels is {\em
    finite\/} and {\em achievable\/}.
 
  These conditions are derived for generic, memoryless and possibly
  non-linear additive noise channels. The results are based on a novel
  sufficient condition that guarantees the convergence of differential
  entropies under point-wise convergence of probability density
  functions.
 
  Perhaps surprisingly, the finiteness of channel capacity holds for
  the majority of setups, including those where inputs and outputs
  have possibly infinite second-moments.
\end{abstract}

\section{Introduction}

Over continuous-alphabets channels, a common belief is that with
``sufficient'' power, one is capable of transmitting at arbitrarily
large rates. Stated differently, if an input of infinite power is
allowed, the channel capacity is infinite. This belief is perhaps
inspired from the well-known Additive White Gaussian Noise (AWGN) and
linear Gaussian channels for example.

However, recent studies have suggested that for some channels this is
not true: even if an infinite power input is allowed, the achievable
rates are not arbitrarily large:
\begin{itemize}
\item In~\cite{Fahs12}, the authors studied a linear additive-noise
  channel where the noise is heavy-tailed --modeled using alpha-stable
  statistics. They showed that even if the input constraint does allow
  for an infinite-power input, the channel capacity is
  finite. Actually, the authors found the optimal input to be
  surprisingly of finite power.


\item In~\cite{Fahs14-1}, the authors studied an additive
  Cauchy-distributed noise, and the constraints did allow as well for
  infinite-power input signals. The capacity was proven to be finite
  despite the fact that the optimal input was found in this case to
  have infinite power.
\end{itemize}

The natural question that arises is: ``under which conditions does one
have a finite channel capacity?'', the answer to which does clearly
depend on the input constraints, but also on the noise statistics. In
this work, we study the interaction between the input constraints, the
input-output relationship and the noise distribution and derive
conditions on the triplet
under which the channel capacity is finite.

This guarantee of finiteness is of high significance as it is
typically the first step one would undertake in order to quantify the
capacity of a channel at hand. Consider for example an additive Gaussian noise
channel where the output $Y$ is related to the input $X$ as
follows:
\begin{equation}
  Y = X + N,
  \label{exam1}
\end{equation}
and where $X$ is independent of the noise $N$. If no
constraints are imposed on $X$, arbitrarily large transmission rates
are achievable. If a second moment constraint is imposed instead, the
capacity is finite. What if a ``weaker'' constraint is
imposed on $X$. Could the rates be arbitrary large?  For illustrative purposes, 
consider the ``weaker" constraint $\E{\ln^2\left(1+|X|\right)} \leq A$ for some $A > 0$. 
This channel~(\ref{exam1}) is equivalent to the channel:
\[
Y = \sgn(U) \left(e^{|U|} -1\right) + N
\] where $U$ now is average power constrained $\E{U^2} \leq A$.
At a first look, it is not clear whether the capacity of such a
channel is finite or not. Indeed, in some sense the channel is
``exponentially amplifying'' the input and by more than what the cost is constraining
it. An appropriately-chosen Cauchy distributed input $X$ will satisfy the
constraint but will have an infinite second moment. The average of
$Y^2$ will be infinite as well. Is the capacity of this channel
finite? Our result provides an unexpected positive answer to this
question.

Theoretical interests aside, it may seem unusual in a Gaussian setup
to impose the constraint $\E{\ln^2\left(1+|X|\right)}$ or any other
type of input constraints that permits $\E{X^2}$ to be
infinite. However, when the channel model features noise distributions
having an infinite second moment, as in the case of some channels 
subject to multiple access~\cite{ElGhannoudi} or radio-frequency~\cite{marcel}
interference, imposing a second moment constraint becomes less sensible; 
such a constraint masks the characterization of the behaviour of the transmission rates
function of the quality of the channel since the channel
signal-to-noise ratio will constantly evaluate to zero. Furthermore,
we note that the usage of constraints allowing the input to have an
infinite second moment has been previously examined within the context
of robust estimation and detection theory~\cite{hub,Shao,gon}.




More formally, the notion of capacity of a discrete memoryless channel
was defined in the early works of Shannon~\cite{Sha48_1,Sha48_2} to be
``the largest'' rate at which one can communicate over a channel with
an arbitrarily low probability of error. Through a coding theorem,
Shannon proved that the capacity is given by the solution to an
optimization problem, whereby the mutual information between the input
and output of the channel is maximized.  When it comes to continuous
channels the inputs of which are potentially constrained, the results
were extended (see for example~\cite{Sha48_2,Gal68, SMITH71}) and the
channel capacity was also tied to a constrained optimization problem.

Naturally, in both setups it is implicitly assumed that the
optimization problem is ``well-defined'', for otherwise relating the
channel capacity to a solution of a maximization of mutual information
is problematic.  In this work we tackle this assumption and provide a
sufficient condition for such an optimization problem to be both {\em
  well-defined\/} and yielding a {\em finite\/} and {\em achievable
  solution\/} for a wide range of channels.



We consider a generic average-constrained channel model where the
noise is additive and absolutely continuous. We prove in
Section~\ref{sc:genericch} that under very mild conditions on the
noise and the constraint, the channel capacity is indeed finite and
achievable.


We start by deriving sufficient conditions that ensure that mutual
information is finite --and hence well-defined-- and we make use of
the extreme value principle~\cite{Luenb} to ensure that the
maximization problem yields a finite and achievable solution. This
could be achieved by enforcing two characteristics:
\begin{itemize}
\item [1-] The input space of feasible distribution functions is
  compact.
\item[2-] The mutual information between the input and the output of
  the channel is continuous in the input distribution function.
\end{itemize}

We emphasize that these two properties are intimately related to the
channel model and the input constraints if any.

The generic model adopted in this work encompasses multiple channel
models found in the literature: We consider input-output relationships
that are possibly non-linear; A generic average cost function
$\mathcal{C}(\cdot)$ is imposed on the input; The absolutely
continuous additive noise has a finite ``super-logarithmic
moment"\footnote[2]{A ``super-logarithmic moment'' is an expectation of
  the form $\E{f(|X|)}$ for some function $f(|x|) =
  \omega(\ln(|x|))$. \\ We say that $f(x) = \omega \left( g(x)
  \right)$ if and only if $\forall \, \kappa > 0, \exists \, c > 0$
  such that $f(x) \geq \kappa g(x), \forall x \geq c$.} as is the case
for Gaussian, uniform, generalized Gaussian, generalized t, Pareto,
Gamma, alpha-stable distributions, and their mixtures.  We show that
whenever the input cost function has a ``super-logarithmic
growth"\footnote[3]{We say that a function $f(x)$ has a
  ``super-logarithmic growth'' whenever $f(|x|) = \omega(\ln(|x|))$.},
the channel capacity is {\em finite\/} and {\em achievable\/}.

Establishing the continuity of mutual information under any
``super-logarithmic'' input constraint is achieved using a novel
result on the {\em convergence of differential entropies\/}. While
numerous studies have tackled this subject (see for
example~\cite{Piera2009, goda2004}), the conditions presented in
Section~\ref{sc:conventr} are among the weakest that insure this
convergence whenever Probability Density Functions (PDFs) converge
point-wise.

The rest of the paper is organized as follows: In
Section~\ref{sc:conventr}, a preliminary theorem concerning the
convergence of differential entropies is listed and proved. The
primary problem and the main result are presented in
Section~\ref{sc:genericch}, where we describe the channel model and
state the conditions under which our result holds. Technical proofs are
derived in Section~\ref{sc:proofs}. Some extensions are listed in
Section~\ref{sc:Ext} and Section~\ref{sc:Conc} concludes the paper.

\section{Convergence of differential entropies}
\label{sc:conventr}

In this section we establish a sufficient condition for the
convergence of differential entropies whenever there is point-wise
convergence of the corresponding PDFs. More precisely, we
prove a theorem that guarantees this convergence under some
rather-mild sufficient conditions. In layman terms, this theorem
states that whenever the PDFs satisfy a super-logarithmic type of
moment, point-wise convergence will imply convergence of differential
entropies. We emphasize that the new conditions are weaker than some
of those derived by Godavarti et al.~\cite[Thm
1]{goda2004}. Alternative conditions found in~\cite[Thm 4]{goda2004}
are not directly related to those presented hereafter.

\begin{theorem}
  \label{thconv}
  Let the sequence of PDFs on $\Reals$, $\{p_m(y)\}_{m \geq 1}$ and
  $p(y)$ satisfy the following conditions:
  \begin{itemize}
  \item[C1-] The PDFs $\{p_m(y)\}_{m}$ and $p(y)$ are uniformly
    upperbounded:
    \begin{equation}
      \exists \, M \in (0,\infty) \, \,\text{s.t.} \sup_{y \in \Reals, m \geq 1} \,\biggl\{ p_m(y), p(y) \biggr\} \leq M.
      \label{UB}
    \end{equation}
    
  \item[C2-] There exists a non-negative and non-decreasing function
    $l: [0,\infty) \to [0,\infty)$,
    such that $l(y) = \omega \left( \ln (y) \right)$\footnotemark[2]
    and
    \begin{equation}
      \sup_{m} \,\biggl\{ \Ep{p_{m}}{l(|Y|)}, \Ep{p}{l(|Y|)} \biggr\} \leq L,
      \label{eq:lcond}
    \end{equation}
    for some positive (finite) value $L$.
  \end{itemize}

  Under these conditions, $h(p_{m}) \rightarrow h(p)$ whenever the
  PDFs $p_m(y) \rightarrow p(y)$ point-wise.
\end{theorem}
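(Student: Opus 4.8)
The plan is to write the differential entropy as $h(p)=\int_{\Reals}\phi\bigl(p(y)\bigr)\,dy$ with $\phi(t)=-t\ln t$ (and $\phi(0)=0$), and to establish $\int_{\Reals}\phi(p_m)\to\int_{\Reals}\phi(p)$. Since $\phi$ is continuous on the compact interval $[0,M]$ and $0\le p_m(y),p(y)\le M$ by C1, the pointwise convergence $p_m\to p$ immediately yields $\phi(p_m(y))\to\phi(p(y))$ pointwise. The whole difficulty is therefore to justify interchanging the limit with the integral over the unbounded domain $\Reals$. I would do this by a ``compact window plus uniform tail'' argument: fix $\epsilon>0$, show that the tail contribution $\int_{|y|>R}|\phi(p_m(y))|\,dy$ can be made smaller than $\epsilon$ for all $m$ (and for $p$) by taking $R$ large, and then on $[-R,R]$ invoke the bounded convergence theorem, since there $|\phi(p_m)|\le\sup_{t\in[0,M]}|\phi(t)|<\infty$ is a fixed integrable bound.

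The heart of the argument is the uniform tail estimate, which is where both hypotheses enter. On the tail I would split the integrand by the size of $p_m(y)$. Where $p_m(y)\ge 1$ (a regime that is vacuous unless $M>1$) one has $|\phi(p_m)|\le M\ln M$, and C2 controls the measure of this part: since $l$ is non-decreasing, $L\ge\int l(|y|)p_m\,dy\ge l(R)\,\bigl|\{|y|>R:\ p_m\ge 1\}\bigr|$, so this set has measure at most $L/l(R)$ and its contribution is at most $M\ln M\,L/l(R)$, which vanishes as $R\to\infty$ uniformly in $m$. Where $p_m(y)<1$ the integrand $\phi(p_m)=p_m\ln(1/p_m)$ is non-negative, and I would compare $p_m(y)$ against the threshold $e^{-\epsilon l(|y|)}$.

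In the regime $p_m(y)\ge e^{-\epsilon l(|y|)}$ we get $\ln(1/p_m)\le\epsilon\,l(|y|)$, hence $\phi(p_m)\le\epsilon\,l(|y|)\,p_m(y)$, whose integral is at most $\epsilon L$ by C2, uniformly in $m$. In the regime $p_m(y)<e^{-\epsilon l(|y|)}$, using that $t\mapsto-t\ln t$ is increasing on $(0,1/e)$ (which contains $e^{-\epsilon l(|y|)}$ once $|y|$ is large enough that $\epsilon l(|y|)>1$), we bound $\phi(p_m)\le\epsilon\,l(|y|)\,e^{-\epsilon l(|y|)}$, a function of $y$ alone. The super-logarithmic growth $l(y)=\omega(\ln y)$ is exactly what makes $\int l(|y|)e^{-\epsilon l(|y|)}\,dy$ finite: choosing $\kappa$ with $\kappa\epsilon>2$ gives $l(y)\ge\kappa\ln y$ for large $y$, so $e^{-\epsilon l(y)}\le y^{-\kappa\epsilon}$ and the factor $l(y)$ is absorbed while leaving an integrable $y^{-1-\delta}$ tail.

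Collecting these pieces, the tail integral is bounded by $M\ln M\,L/l(R)+\epsilon L+\epsilon\int_{|y|>R}l(|y|)e^{-\epsilon l(|y|)}\,dy$ for every $m$, and the same bound holds for $p$; this can be driven below any prescribed level by fixing $\epsilon$ small and then $R$ large. Writing $|h(p_m)-h(p)|\le\int_{-R}^{R}|\phi(p_m)-\phi(p)|\,dy+2\,(\text{tail bound})$, letting $m\to\infty$ (the compact part vanishes by bounded convergence) and then sending $\epsilon\to0$ and $R\to\infty$ gives the claim. The main obstacle is precisely this uniform-in-$m$ tail control: the comparison at the level $e^{-\epsilon l(|y|)}$ is what converts the abstract moment bound C2 and the growth condition on $l$ into an explicit, summable domination that holds simultaneously for the entire sequence.
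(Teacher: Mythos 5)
Your proposal is correct in substance and shares the paper's overall skeleton --- a tail estimate that holds uniformly over the whole family of densities, followed by bounded/dominated convergence on the compact window --- but the way you control the tail is genuinely different. The paper first rescales ($Z = MY$) so that all densities are bounded by one, which makes the tail integrand $-p\ln p$ non-negative, and then bounds the tail by comparison with the Cauchy density $q(y)=\frac{1}{\pi}\frac{1}{1+y^2}$: it writes $-\int_{|y|\ge\tilde y}p\ln p\,dy = -\int_{|y|\ge\tilde y}p\ln q\,dy+\int_{|y|\ge\tilde y}q\,\frac{p}{q}\ln\frac{q}{p}\,dy$, uses $x\ln(1/x)\le 1/e$, the identity $\Ep{q}{\ln\left[1+Y^2\right]}=\ln 4$, and Markov-type bounds through $l$. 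You instead keep the original scale and split the tail according to the size of $p_m(y)$ into three regimes ($p_m\ge 1$, $e^{-\epsilon l}\le p_m<1$, and $p_m<e^{-\epsilon l}$), each controlled either by C2 directly or by monotonicity of $t\mapsto -t\ln t$. Your route is more self-contained (no rescaling, no Cauchy comparison, no external identity) at the cost of a case analysis, and it also implicitly re-establishes existence and finiteness of the entropies, which the paper simply cites from Rioul; the paper's route is shorter once the reduction to $M\le 1$ is made, since the tail integrand is then of one sign.

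One step needs repair: in the third regime you claim that after substituting $e^{-\epsilon l(y)}\le y^{-\kappa\epsilon}$ ``the factor $l(y)$ is absorbed.'' Taken literally this fails, because C2 provides no upper bound on $l$ --- it could grow like $e^{y}$, in which case $l(y)\,y^{-\kappa\epsilon}$ is not integrable, so you cannot keep $l(y)$ and replace only the exponential. The fix is immediate and uses monotonicity rather than substitution: $s\mapsto s e^{-\epsilon s}$ is decreasing for $s\ge 1/\epsilon$, and for $y$ large enough one has $l(y)\ge\kappa\ln y\ge 1/\epsilon$, hence $l(y)e^{-\epsilon l(y)}\le \kappa\ln y\cdot e^{-\epsilon\kappa\ln y}=\kappa\,(\ln y)\, y^{-\kappa\epsilon}$, which is integrable once $\kappa\epsilon>2$. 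With that one-line observation your domination is valid, the tail bound is uniform in $m$, and the rest of your argument goes through.
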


Before we prove the theorem, we highlight the importance of 
condition C2 by providing an example where it is not satisfied,
and the theorem does not hold.

\begin{example}
  Consider the sequence of PDFs $\{p_{m}(x)\}_{m \geq 3}$ defined
  on $\mathbb{R}$ as follows:
  \begin{equation*}
    p_{m}(x) = \left\{ \begin{array}{ll} 
        \displaystyle 1 - \frac{1}{\ln m} \quad & x \in  [0;1] \medskip\\
        \displaystyle  \frac{1}{(\ln m)^2} \frac{1}{x} \quad & x \in (1;m].
      \end{array} \right.
  \end{equation*}
  
  This sequence of PDFs converges point-wise to $p(x)$, the uniform
  distribution on $[0,1]$, and condition C1 is satisfied with a
  uniform upperbound $M=1$. Computing the differential entropies, 
  \begin{align*}
  h(p) \,\,\, &= 0\\
    h(p_m)
    &= -\left(1 - \frac{1}{\ln m} \right) \ln \left(1 - \frac{1}{\ln m} \right)\\
    &    \qquad + \frac{2 \ln (\ln m)}{(\ln m)^2} \int_{1}^{m}\frac{1}{x}\,dx  +  \frac{1}{(\ln m)^2} \int_{1}^{m}\frac{\ln x}{x}\,dx\\
    &= -\left(1 - \frac{1}{\ln m} \right) \ln \left(1 - \frac{1}{\ln m} \right) + \frac{2 \ln (\ln m)}{\ln m} 
    + \frac{1}{2}\\
    &\rightarrow \frac{1}{2} \,\,\text{as} \,\, m \rightarrow{\infty}, 
  \end{align*}
  and hence there is no convergence of differential entropies. 
  This is explained by the fact that condition C2 is not satisfied. 
   Indeed, consider any function $l(x)$ that is
  non-negative, non-decreasing and $l(x) = \omega \left( \ln x
  \right)$. By definition, for any $\kappa > 0$, there exists a $c >
  0$ such that $l(x) \geq \kappa \ln x$ for $x \geq c$. Therefore, for any
  $m \geq c$,
  \begin{align*}
     &\Ep{p_{m}}{l(|X|)}\\
    &= \left(1-\frac{1}{\ln m}\right)\int_{0}^{1} l(x)\,dx + \frac{1}{(\ln m)^2} \int_{1}^{m} \frac{1}{x} \, l(x)\,dx \nonumber\\
    &= \left(1-\frac{1}{\ln m}\right)\int_{0}^{1} l(x)\,dx + \frac{1}{(\ln m)^2} \int_{1}^{c} \frac{1}{x} \, l(x)\,dx \nonumber\\
    &\qquad \qquad \qquad \qquad \qquad \qquad \qquad + \frac{1}{(\ln m)^2} \int_{c}^{m}   \frac{1}{x} \, l(x)\,dx\nonumber\\
    &\geq  \left(1-\frac{1}{\ln m}\right)\int_{0}^{1} l(x)\,dx + \frac{1}{(\ln m)^2} \int_{1}^{c} \frac{1}{x} \, l(x)\,dx\nonumber\\
    &\qquad \qquad \qquad \qquad \qquad \qquad \qquad + \frac{\kappa}{(\ln m)^2} \int_{c}^{m}   \frac{1}{x} \, \ln x \,dx \nonumber\\
    &= \left(1-\frac{1}{\ln m}\right)\int_{0}^{1} l(x)\,dx + \frac{1}{(\ln m)^2} \int_{1}^{c} \frac{1}{x} \, l(x)\,dx \nonumber\\
    &\qquad \qquad \qquad \qquad \qquad \qquad \qquad + \kappa\,\frac{(\ln m)^2 - (\ln c)^2}{2 (\ln m)^2}\nonumber\\
    &\geq \kappa\,\frac{(\ln m)^2 - (\ln c)^2}{2 (\ln m)^2},
  \end{align*}
  which is greater than $\frac{3}{8} \kappa$ whenever $m > c^2$.
  Since the inequality holds for any $\kappa > 0$ and $m$ large enough
  then $ \sup_{m} \,\biggl \{ \Ep{p_{m}}{l(|X|)}\biggr\}$ is unbounded
  which violates condition C2. We proceed next to the proof of Theorem~\ref{thconv}.
\end{example}

\begin{proof} 
  We start by noting that the differential entropies $h(p)$ and
  $\{h(p_m)\}_{m \geq 1}$ exist and are finite by virtue of the fact
  that the PDFs are upperbounded and have a finite logarithmic
  moment~\cite[Proposition 1]{rioul2011}. 

  Assume now that the conditions of the theorem hold and that $p_m$
  converges to $p$ point-wise. If the upperbound~(\ref{UB}) $M$ is
  larger than one, consider the change of variables, $Z = MY$ (for
  which $h(Z) = h(Y) + \ln M$,) or equivalently the PDFs,
  \begin{align*}
    d(y) \eqdef \frac{1}{M} p \left( \frac{y}{M} \right), \qquad 
    d_m(y) \eqdef \frac{1}{M} p_m \left( \frac{y}{M} \right), m \geq 1.
  \end{align*}
  These densities are upperbounded by one and the sequence
  $\{d_m(y)\}$ converges point-wise to $d(y)$. Furthermore, the
  function $l'(y) = l(y/M)$ is non-negative, non-decreasing and $l'(y)
  = \omega \left( \ln (y) \right)$. Additionally,
  \[
  \Ep{d_{m}}{l'(|Y|)} = \Ep{p_{m}}{l'(|MY|)} = \Ep{p_{m}}{l(|Y|)} \leq L.
  \]
  The conditions of the theorem therefore hold for the laws $\{ d_m, d
  \}$ and in what follows we assume without loss of generality that $M
  \leq 1$, and the differential entropies are all non-negative .


  Let $\tilde{y}$ be any positive scalar such that $l(\tilde{y}) > 0$,
  and denote by $q(y) = \frac{1}{\pi} \frac{1}{1+y^2}$ the Cauchy
  density. Then, using the convention ``$0 \ln 0 = 0$'' and the
  fact that $y \ln y \geq - \frac{1}{e}$ for $y > 0$, we can write
 \begin{align}
    &- \intl_{|y| \geq \tilde{y}} p(y) \ln p(y)\,dy\nonumber\\  
    &= \,-\intl_{|y| \geq \tilde{y}} p(y) \ln q(y)\,dy + \intl_{|y| \geq \tilde{y}} 
     q(y)\frac{p(y)}{q(y)}\ln\frac{q(y)}{p(y)}\,dy \nonumber\\
    &\leq \, \ln \pi \intl_{|y| \geq \tilde{y}} p(y)\,dy + \intl_{|y| \geq \tilde{y}} 
    \ln \left[1+ y^2 \right] p(y)\,dy \nonumber\\
    &\qquad \qquad \qquad \qquad \qquad \qquad \qquad \quad +  \frac{1}{e} \intl_{|y| \geq \tilde{y}} q(y)\,dy \nonumber\\
    &\leq \, \frac{\ln\pi}{l(\tilde{y})} \intl_{|y| \geq \tilde{y}} l(|y|)\, p(y)\,dy 
    + \intl_{|y| \geq \tilde{y}} \ln \left[ 1 + y^2 \right] p(y)\,dy\nonumber\\
    &\qquad \qquad \qquad + \frac{1}{e} \frac{1}{\ln \left[1+\tilde{y}^2\right]} \intl_{|y| \geq \tilde{y}}  
    \ln\left[1+ y^2\right] q(y)\,dy, \label{nondec88}
  \end{align}
  where equation~(\ref{nondec88}) is due to the fact that $l(\cdot)$
  is non-decreasing. Hence,
  \begin{align}
    &- \intl_{|y| \geq \tilde{y}} p(y) \ln p(y)\,dy\nonumber\\
    &\leq \, \ln \pi \frac{\Ep{p}{l(|Y|)}}{l(\tilde{y})} + 2 \intl_{|y| \geq \tilde{y}} 
    \ln \left[1+\left|y\right|\right] p(y)\,dy \nonumber\\
    &\qquad \qquad \qquad \qquad \qquad \qquad \,\,\,\, + \frac{1}{e} \frac{\Ep{q}{\ln 
        \left[1+Y^2\right]}}{\ln \left[1+\tilde{y}^2\right]} \label{marcoba}\\
    &\leq \,  \frac{L \ln \pi}{l(\tilde{y})} + 2 \sup_{|y| \geq \tilde{y}} \left\{ 
      \frac{\ln \left[1+|y|\right]}{l(|y|)}\right\}
    \intl_{|y| \geq \tilde{y}} l(y)\,p(y)\,dy \nonumber\\
    &\qquad \qquad \qquad \qquad \qquad \qquad \quad + \frac{1}{e} \frac{\ln 4}{\ln 
      \left[1+\tilde{y}^2\right]}  \label{cauch}\\
    &\leq \,  \frac{L \ln \pi}{l(\tilde{y})} + 2 L \sup_{|y| \geq \tilde{y}} \left\{\frac{\ln 
        \left[1+|y|\right]}{l(|y|)}\right\}
    + \frac{1}{e} \frac{\ln 4}{\ln \left[1+ \tilde{y}^2\right]} \label{lastuni},
  \end{align} 
  where equation~(\ref{marcoba}) is justified since $l(\tilde{y})$ is
  positive and $l(y)$ is non-negative. In order to write
  equation~(\ref{cauch}) we use the identity
  $\text{E}_q\left[\ln\left(1+ y^2\right)\right] = \ln
  4$~\cite[Sec.3.1.3, p.51]{kapur}. The supremum in
  equations~(\ref{cauch}) and~(\ref{lastuni}) is finite --and goes to
  $0$-- for $\tilde{y}$ large-enough because $l(y) = \omega\left(\ln y
  \right)$.

  Since the upperbound~(\ref{lastuni}) also holds for any $p_m(y)$,
  then for every $\delta > 0$, there exists a $\tilde{y}>0$ such that for all $m \geq 1$:
  \begin{equation*}
    \left| \, \intl_{|y| \geq \tilde{y}} \bs p_m(y)\ln p_m(y)\,dy\right| < \delta \,\, \& \,\,
    \left| \, \intl_{|y| \geq \tilde{y}} \bs p(y)\,\ln p(y)\,dy\right| < \delta. 
  \end{equation*}
  It remains to show that 
  \begin{equation*}
    \lim_{m \rightarrow +\infty} - \bs \intl_{|y| < \tilde{y}}\bs p_m(y)\ln p_m(y)\,dy
    = - \bs \intl_{|y| < \tilde{y}}\bs p(y)\,\ln p(y)\,dy,
  \end{equation*}
  which is guaranteed by the Dominated Convergence Theorem (DCT) since
  $\left|p_m(y) \ln p_m(y) \right| \leq \frac{1}{e}$ by virtue of the
  fact that $p_m(y) \leq 1$ for all $m$, which completes the proof.
\end{proof}

\section{Sufficient conditions for finiteness of channel capacity}
\label{sc:genericch}

In what follows we derive sufficient conditions for a memoryless
additive-noise channel to have a finite and achievable capacity. More
specifically, we consider a generic discrete-time real and memoryless
noisy communication channel where the noise is additive and where the
input and output are possibly non-linearly related as follows:
\begin{equation}
  Y = f(X) + N,
  \label{eq:generic_ch1}
\end{equation}
where $Y \in \Reals$ is the channel output and where the input
$X$ is assumed to have an alphabet $\set{X} \subseteq
\Reals$. The channel's input is distorted according to the
deterministic and possibly non-linear function $f(x)$.  Additionally,
the communication channel is subjected to an additive noise --that is
independent of the input-- that is absolutely continuous with PDF
$p_{N}(\cdot)$.

Finally, we assume that the input is subject to an average cost
constraint of the form: $\E{\cost{X}} \leq A$, for some $A \in
(0,\infty)$ and where $\mathcal{C}(\cdot)$ is some cost function:
\begin{equation*}
\mathcal{C}: [0,\infty) \longrightarrow \Reals.
\end{equation*}

Accordingly, we define for $A > 0$
\begin{equation}
  \set{P}_A = \Big\{ \text{Prob. distributions } F:\int \cost{x}\,dF(x) \leq A \Big\},
  \label{eq:Set}
\end{equation}
the set of all distribution functions satisfying the average cost
constraint.


The primary question that we would like to answer is whether one can
reliably transmit an arbitrarily large number of bits per use over this
channel. Said differently, are the achievable rates over this channel
bounded? The answer to this question follows from those of
the following two questions:
\begin{itemize}
\item Is the mutual information between a feasible input and the
  corresponding output always finite?
\item If it is the case, can this mutual information be arbitrarily large?
\end{itemize}

A positive answer to the first question allows by the coding
theorem~\cite{cover} to state that the channel capacity is the
supremum of the mutual information $I(\cdot)$ between the input $X$
and output $Y$ over all input probability distributions $F$ that meet
the constraint $\mathcal{P}_A$:
\begin{equation*}
  C = \sup_{F\in \mathcal{P}_A}\,I(F).
\end{equation*} 

For the channel at hand~(\ref{eq:generic_ch1}), we note that the
channel transition probability law is absolutely continuous with
density function given by
\begin{equation}
  p_{Y|X}(y|x) = p_N (y-f(x)), \quad y \in \Reals, \,\,x \in \set{X}.
  \label{eq:py_x}
\end{equation}
and the mutual information may be expressed as~\cite{SMITH71}
\begin{equation}
  \label{mutdefn}
  I(F) \bs \, \eqdef  \bs \iint \bs p_N\left(y-f(x)\right) \ln \left[ \frac{p_N\left(y-f(x)\right)}{p(y;F)} \right] dy\,dF(x),
\end{equation}
where $p(y;F) = \int p_N\left(y-f(x)\right)\, dF(x)$ is the output PDF.

\subsection*{Sufficient conditions}

We make the following rather-mild assumptions:
\begin{itemize}
\item[$\bullet$] \underline{The cost function $\mathcal{C}(\cdot)$:}
  \begin{itemize}
  \item[A1-] The cost function is lower semi-continuous.
  \item[A2-] The cost function is non-decreasing.
  \item[A3-] $\cost{x} = \omega\left(\ln \left|f(x)\right| \right)$.
  \end{itemize}

  Without loss of generality, one may assume that $\mathcal{C}(\cdot)$
  is non-negative. For if it were not, define $\mathcal{C'}(|x|) =
  \mathcal{C}(|x|) - \mathcal{C}(0)$ and adjust the input constraint
  accordingly.
   
  \medskip
\item[$\bullet$] \underline{The function $f(\cdot)$:}
  \begin{itemize}
  \item[A4-] The function is continuous.
  \item[A5-] The absolute value of the function $\left| f(\cdot)
    \right|$ is an non-decreasing function of $|x|$ and $\left| f(x)
    \right| \rightarrow +\infty$ as $|x| \rightarrow +\infty$.
  \end{itemize}
  
  \medskip
\item[$\bullet$] \underline{The noise PDF $p_{N}(\cdot)$:}
  \begin{itemize}
  \item[A6-] The PDF is continuous on $\Reals$.
  \item[A7-] The PDF is upperbounded.
  \item[A8-] There exits a non-decreasing function
    \begin{equation*}
      \mathcal{C}_N: [0,\infty) \longrightarrow \Reals,
    \end{equation*}
    such that $\Ncost{n} = \omega \left( \ln |n| \right)$, and
    \begin{equation*}
      \Ep{N}{\Ncost{N}} = L_N < \infty.
    \end{equation*}

    As an example, this condition holds true for any noise PDF whose
    tail is ``faster'' than $\frac{1}{x \left( \ln x \right)^3}$.
  \end{itemize}

  Conditions A7 and A8 guarantee that the noise differential entropy
  $h_N$, exists and is finite~\cite[Proposition 1]{rioul2011}.

  Since from an information theoretic perspective, the general channel
  model~(\ref{eq:generic_ch1}) is invariant with respect to output
  scaling, we consider without loss of generality that the noise PDF
  is upperbounded by one.

  Also without loss of generality, one may assume that
  $\mathcal{C}_N(\cdot)$ is non-negative. Otherwise, one may adopt
  $\mathcal{C'}_N(|x|) = \mathcal{C}_N(|x|) - \mathcal{C}_N(0)$.
\end{itemize}

The above assumptions are sufficient conditions on the triplet
$f(\cdot)$, $\mathcal{C}(\cdot)$, and $p_N(\cdot)$ that guarantee the
finiteness and the achievability of the capacity of
channel~(\ref{eq:generic_ch1}):
\begin{theorem}
  Under conditions A1 through A8, the capacity of the average-cost
  constrained channel~(\ref{eq:generic_ch1}) is finite and achievable.

  Furthermore, the maximum is achieved by a unique $F^*$ in
  $\mathcal{P}_{A}$ if and only if the output PDF is injective in $F$.
  \label{th:main}
\end{theorem}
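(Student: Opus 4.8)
The plan is to recast the optimization $C = \sup_{F \in \mathcal{P}_A} I(F)$ as the maximization of a continuous functional over a compact set, and then invoke the extreme value principle~\cite{Luenb}. I would first simplify the objective: since $N$ is independent of $X$ and differential entropy is invariant under translation, the conditional output entropy satisfies $h(Y \mid X) = h_N$, a finite constant by A7--A8. Hence
\[
I(F) = h\bigl(p(\cdot;F)\bigr) - h_N,
\]
so maximizing $I$ over $\mathcal{P}_A$ is equivalent to maximizing the output differential entropy $F \mapsto h\bigl(p(\cdot;F)\bigr)$. Because the output PDF $p(\cdot;F)$ is bounded (A7) and, as shown below, carries a finite logarithmic moment, \cite[Proposition 1]{rioul2011} guarantees that $h\bigl(p(\cdot;F)\bigr)$ --- and therefore $I(F)$ --- is finite for every $F \in \mathcal{P}_A$, which already settles the first of the two questions raised above. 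The topology throughout is that of weak convergence of distributions.

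Next I would establish that $\mathcal{P}_A$ is weakly compact. Conditions A2, A3 and A5 force $\cost{x} \to \infty$ as $|x| \to \infty$: indeed $|f(x)| \to \infty$ and $\cost{x} = \omega\bigl(\ln|f(x)|\bigr)$. A Markov-type bound then gives $c_R\,F\bigl(\{|x| > R\}\bigr) \le \int \cost{x}\,dF(x) \le A$ with $c_R = \inf_{|x| > R}\cost{x} \to \infty$, so $F\bigl(\{|x|>R\}\bigr) \to 0$ uniformly over $F \in \mathcal{P}_A$; the family is tight and hence relatively compact by Prohorov's theorem. For closedness, the lower semi-continuity of $\cost{\cdot}$ (A1) makes $F \mapsto \int \cost{x}\,dF(x)$ lower semi-continuous under weak convergence, so the sublevel set $\mathcal{P}_A$ is weakly closed. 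A weakly closed subset of a relatively compact set is compact.

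The heart of the argument is the continuity of $I$, which I would obtain from Theorem~\ref{thconv}. Fix a sequence $F_n \to F$ weakly (sequences suffice, as the weak topology is metrizable on the tight set $\mathcal{P}_A$). For each $y$, the map $x \mapsto p_N\bigl(y - f(x)\bigr)$ is bounded and continuous by A4, A6 and A7, so weak convergence gives $p(y;F_n) \to p(y;F)$ point-wise. Condition~C1 holds with $M = 1$ after the output-scaling normalization, since $p(y;F) \le \sup_{t \in \Reals} p_N(t) \le 1$. The delicate point --- and the step I expect to be the main obstacle --- is verifying condition~C2 uniformly over $\mathcal{P}_A$: I must exhibit a single non-negative, non-decreasing, super-logarithmic $l$ with $\sup_n \Ep{p(\cdot;F_n)}{l(|Y|)} \le L$. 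Using $|Y| \le |f(X)| + |N| \le 2\max\bigl(|f(X)|,|N|\bigr)$ and monotonicity of $l$, one has $l(|Y|) \le l\bigl(2|f(X)|\bigr) + l\bigl(2|N|\bigr)$, splitting the moment into an input part $\int l\bigl(2|f(x)|\bigr)\,dF_n(x)$ and a noise part $\Ep{N}{l(2|N|)}$. The construction chooses $l$ growing slowly enough that $l\bigl(2|f(x)|\bigr) \le \cost{x}$ for $|x|$ large (possible since A3 gives $\cost{x} = \omega(\ln|f(x)|)$), bounding the input part by $A$ up to a constant, while simultaneously $l(2|n|) \le \Ncost{n}$ for $|n|$ large (possible since A8 gives $\Ncost{n} = \omega(\ln|n|)$), bounding the noise part by $L_N$ up to a constant; as the super-logarithmic class is stable under minima and under scaling, such an $l$ exists and is itself super-logarithmic, though the careful matching of growth rates against $f$ is where the real work lies. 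With C1 and C2 in hand, Theorem~\ref{thconv} yields $h\bigl(p(\cdot;F_n)\bigr) \to h\bigl(p(\cdot;F)\bigr)$, so $I$ is weakly continuous, and the extreme value principle furnishes a maximizer $F^* \in \mathcal{P}_A$ with $C = I(F^*) < \infty$, giving both finiteness and achievability.

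Finally, for uniqueness I would exploit concavity. The map $F \mapsto p(\cdot;F)$ is affine, and $p \mapsto h(p) = -\int p\ln p$ is strictly concave on densities; hence $I(F) = h\bigl(p(\cdot;F)\bigr) - h_N$ is concave on the convex set $\mathcal{P}_A$, and strictly so along any segment on which the output PDF is non-constant. Consequently the optimal output density $p^* = p(\cdot;F^*)$ is unique, and the set of maximizers is exactly the fiber $\{F \in \mathcal{P}_A : p(\cdot;F) = p^*\}$. This fiber reduces to a single point precisely when the linear map $F \mapsto p(\cdot;F)$ is injective, yielding the stated equivalence: the maximizer is unique if and only if the output PDF is injective in $F$.
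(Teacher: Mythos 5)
Your proposal is correct and follows essentially the same route as the paper: the extreme value principle applied to $I(F)=h_Y(F)-h_N$ over the weakly compact $\mathcal{P}_A$ (tightness via Markov plus Prohorov, closedness via lower semi-continuity of the cost), continuity obtained from Theorem~\ref{thconv} with C1 trivial and C2 verified by a super-logarithmic $l$ dominated by both the input cost (pulled back through $f$) and the noise cost, and uniqueness via strict concavity iff injectivity of $F \mapsto p(\cdot;F)$. The one step you leave as "where the real work lies" --- exhibiting that $l$ --- is precisely what the paper's proof of its continuity theorem carries out explicitly, by defining $z(y)$ as an inverse image of $f$, setting $\mathcal{C}_{\min}(y)=\min\left\{\mathcal{C}\left(z(y)\right); \mathcal{C}_N(y)\right\}$, and taking $l(y)=\mathcal{C}_{\min}(y/2)$, which is exactly the "minimum of the two pulled-back, rescaled costs" you describe.
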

We point out that assumptions A4 through A8 are related 
to the channel model at hand and are not ``conditions" per say.
These assumptions are satisfied by the vast majority of common 
models found in the literature.

When thinking in terms of conditions on the input -- controlled by the user,
A1, A2 and A3 are to be considered. Note that these conditions are also common 
to all cost functions found in the literature. While A1 and A2 are rather technical,
the relevance of A3 may be seen in the following example.

\begin{example}
  Consider the linear additive channel~(\ref{exam1}), where now the
  noise $N$ is a uniformly distributed random variable on the interval
  $[0,1)$.

  Let $X_1$ and $X_2$ be two discrete random variables taking integer
  values $k \geq 2$, with respective probability mass functions:
  \begin{equation*}
    p_{X_{1}}(k) = B_1 \, \frac{1}{k (\ln k)^{2}}, \quad p_{X_{2}}(k) = B_2 \, \frac{1}{k (\ln k)^{3}}, 
    \qquad k \geq 2,
  \end{equation*}
  where $B_1$ \& $B_2$ are the normalizing finite constants,
  \begin{equation*}
    B_1 = \left[ \sum_{k =2}^{\infty} \frac{1}{k (\ln k)^{2}} \right]^{-1} \quad 
    B_2 = \left[ \sum_{k =2}^{\infty} \frac{1}{k (\ln k)^{3}} \right]^{-1}.
  \end{equation*} 
  
  Let $Y_1$ and $Y_2$ be the outputs of channel~(\ref{exam1}) whenever
  its inputs are $X_1$ and $X_2$ respectively. Given the placement of
  the mass points, $X_1$ may be perfectly inferred from $Y_1$ and
  $H(X_1|Y_1) = 0$. Similarly $H(X_2|Y_2) = 0$ and therefore the mutual informations
  \begin{align*}
    I(X_1;Y_1) & = H(X_1) - H(X_1|Y_1) = H(X_1) \\
    I(X_2;Y_2) & = H(X_2).
  \end{align*} 
 
  Computing $H(X_1)$ and $H(X_2)$, we obtain:
  \begin{align*}
    H(X_i) & = -\sum_{k \geq 2} p_{X_{i}}(k) \ln p_{X_{i}}(k) \\
    	& =  -\ln B_i + B_i \sum_{k \geq 2} \frac{\ln k + (1+i) \ln (\ln k)}{k (\ln k)^{1+i} } \quad i=1, 2,
  \end{align*}
  which diverges for $i =1$ and converges for $i = 2$. Accordingly, the
  mutual information of channel~(\ref{exam1}) is infinite when the
  input is $X_1$ whereas it is finite for input $X_2$. Note that 
  $\E{\ln X_1}$ is infinite while $\E{\ln X_2}$ is finite, and this example 
  showcases the importance of condition A3 when it comes
  to the finiteness of mutual information. Whenever A3 is not enforced,
  the channel capacity might be infinite as $X_1$ yields an infinite mutual
  information. The theorem states that when the condition is enforced,
  the capacity will be finite. 
\end{example}
 An interesting observation is that both $\E{X_1^2}$ and $\E{X_2^2}$ are infinite, 
 however as inputs to the channel they yield respectively an infinite and a finite mutual
  information. We proceed next to prove Theorem~\ref{th:main}.

\begin{proof}
  The first statement of the theorem is established using the extreme
  value principle which we state for completeness and can be found
  in~\cite{Luenb}:
  \begin{theorem*}
    If $I(\cdot)$ is a {\em real-valued\/}, weak {\em continuous\/}
    functional on a {\em compact\/} set $\Omega \subseteq \set{F}$,
    then $I(\cdot)$ achieves its maximum on $\Omega$.
  \end{theorem*}

  In order to apply this principle, we show in Section~\ref{sc:proofs}
  that the set $\mathcal{P}_{A}$ is {\em compact\/}
  (Theorem~\ref{th:setprop}) and that the mutual information $I(F)$ is
  {\em finite\/} and {\em continuous\/} (Theorems~\ref{th:mutinf1}
  and~\ref{th:mutconc}). Therefore, the capacity of the average-cost
  constrained channel is finite and achievable.

  When it comes to uniqueness, since $\mathcal{P}_{A}$ is {\em
    convex\/} (Theorem~\ref{th:setprop}) whenever $I(\cdot)$ is {\em
    strictly concave\/}, then the maximum
  \begin{equation*}
    C = \max_{F \in \mathcal{P}_{A}}{I(F)},
  \end{equation*}
  is achieved by a unique $F^*$ in $\mathcal{P}_{A}$.

  Knowing that $I(\cdot)$ is {\em concave\/}
  (Theorem~\ref{th:mutconc}), its strict concavity is equivalent to
  the strict concavity of the output differential entropy in
  $p_Y(\cdot)$. This is indeed the case if and only if $p_Y(\cdot)$ is
  injective in $F$.
\end{proof}


The next section is dedicated to the proofs of
Theorems~\ref{th:setprop},~\ref{th:mutinf1} and~\ref{th:mutconc}.

\section{Proofs of the Theorems}
\label{sc:proofs}

We use techniques that have been first developed in~\cite{SMITH71} and
later adopted in various works on mutual information maximization as
in~\cite{fahsj}: Denote by $\set{F}$ the space of all probability
distribution functions on $\Reals$. We adopt weak
convergence~\cite[III-1, Def.2, p.311]{ShiryBook} on $\set{F}$, and
use the Levy metric to metrize this weak convergence~\cite[Th.3.3,
p.25]{hub}. The optimization is carried out in this metric topology.

\subsection*{Optimization set properties}

\begin{theorem}
  Whenever conditions A1, A2, A3 and A5 are satisfied, the set
  $\set{P}_A$ defined in~(\ref{eq:Set})
  is convex and compact.
  \label{th:setprop}
\end{theorem}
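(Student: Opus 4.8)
The plan is to prove two properties separately---convexity and compactness---with convexity being routine and compactness being the substantive part.

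\textbf{Convexity.} First I would dispatch convexity. The set $\set{P}_A$ is defined by the linear (in $F$) constraint $\int \cost{x}\,dF(x) \leq A$. For any two feasible distributions $F_1, F_2 \in \set{P}_A$ and any $\lambda \in [0,1]$, the convex combination $F_\lambda = \lambda F_1 + (1-\lambda)F_2$ is again a probability distribution, and by linearity of the integral $\int \cost{x}\,dF_\lambda(x) = \lambda\int \cost{x}\,dF_1(x) + (1-\lambda)\int \cost{x}\,dF_2(x) \leq \lambda A + (1-\lambda)A = A$. Hence $F_\lambda \in \set{P}_A$ and the set is convex. This step uses only the structure of the constraint and needs none of A1--A5 beyond the cost being well-defined.

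\textbf{Compactness.} Since we work in the Levy metric metrizing weak convergence on $\set{F}$, I would establish compactness via sequential compactness: every sequence $\{F_n\} \subseteq \set{P}_A$ has a subsequence converging weakly to some limit that lies in $\set{P}_A$. The argument proceeds in two stages. First, \emph{tightness}: I would show $\set{P}_A$ is tight, so that Prokhorov's (or Helly's selection) theorem furnishes a weakly convergent subsequence $F_{n_k} \Rightarrow F$ with $F \in \set{F}$. Tightness follows from the growth condition A3 together with A5: because $\cost{x} = \omega(\ln|f(x)|)$ and $|f(x)| \to \infty$ as $|x| \to \infty$ with $\cost{\cdot}$ non-decreasing (A2), the cost $\cost{x}$ itself tends to $+\infty$ as $|x|\to\infty$. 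A uniform bound $\int \cost{x}\,dF_n(x) \leq A$ on a coercive non-negative function then controls the tails uniformly: by Markov's inequality, $F_n\bigl(\{|x| > R\}\bigr) \leq A / \inf_{|x|>R}\cost{x} \to 0$ as $R \to \infty$, uniformly in $n$, which is exactly tightness.

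Second, \emph{closedness}: I would verify that the weak limit $F$ still satisfies the constraint, i.e.\ $\int \cost{x}\,dF(x) \leq A$. This is where lower semi-continuity A1 and the non-negativity/non-decreasing structure enter. Since $\cost{\cdot}$ is non-negative (assumed w.l.o.g.) and lower semi-continuous, it is the pointwise increasing limit of a sequence of bounded continuous functions, and the functional $F \mapsto \int \cost{x}\,dF(x)$ is therefore lower semi-continuous with respect to weak convergence. Consequently $\int \cost{x}\,dF(x) \leq \liminf_{k} \int \cost{x}\,dF_{n_k}(x) \leq A$, so $F \in \set{P}_A$ and the set is closed, hence compact. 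The main obstacle I anticipate is the closedness step rather than tightness: weak convergence only guarantees convergence of integrals of bounded continuous functions, so one cannot directly pass $\int \cost{x}\,dF_{n_k} \to \int \cost{x}\,dF$ when $\cost{\cdot}$ is unbounded and merely lower semi-continuous. The correct device is the lower-semicontinuity of the integral functional via the Portmanteau theorem (approximating $\cost{\cdot}$ from below by bounded continuous functions and invoking monotone convergence), which yields the one-sided inequality that is all we need to keep the limit feasible.
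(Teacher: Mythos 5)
Your proposal is correct and follows essentially the same route as the paper: convexity by linearity of the constraint, tightness via Markov's inequality applied to the coercive cost (using A2, A3, A5), Prokhorov's theorem for a weakly convergent subsequence, and feasibility of the limit via weak lower semi-continuity of $F \mapsto \int \cost{x}\,dF(x)$ from A1. The only cosmetic difference is that the paper invokes this last fact as a cited theorem (Chung, Th.~4.4.4) while you sketch its standard proof by approximating the lower semi-continuous cost from below by bounded continuous functions and applying monotone convergence.
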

  
\begin{proof}

We note first that the theorem was shown to hold for cost functions of the form $\cost{x} = |x|^r$, for $r > 1$ in~\cite{fahsj,Fahs12}. We adopt the same methodologies to generalize the results presented hereafter. 

  \medskip
  \subsubsection*{Convexity} 
  
  Let $F_1$ and $F_2$ be two probability distribution functions in
  $\set{P}_A$, and $\lambda$ some scalar between $0$ and $1$.  Define
  $F = \lambda F_1 + (1-\lambda)F_2$. It is clear that $F$ is a
  probability distribution function because it is non-decreasing,
  right continuous, $F(-\infty)=0$ and $F(+\infty)=1$. Additionally,
  \begin{align*}
    \int_{\Reals} \cost{x} \, dF & = \int_{\Reals} \cost{x} \, d\left(\lambda F_1 + (1-\lambda)F_2\right)\\
    & = \lambda \int_{\Reals} \cost{x} \, dF_1 + (1-\lambda) \int_{\Reals} \cost{x} \, dF_2\\
    & \leq \lambda A + (1-\lambda) A = A.
  \end{align*}
  
  Therefore, $F \in \mathcal{P}_A$ and $\set{P}_A$ is convex.

  \medskip
  \subsubsection*{Compactness}
  
  Consider a random variable $X$ with probability distribution
  function $F \in \mathcal{P}_A$.  Applying Markov's inequality to
  random variable $\cost{X}$ yields,
  \begin{equation*}
    \Pr\{\cost{X} \geq \alpha\} \leq \frac{\E{\cost{X}}}{\alpha}, \quad \forall \alpha>0.
  \end{equation*}
  
  Now let 
  \begin{equation*}
  K = \inf\,\bigl\{ x \in [0,\infty) \,\,\text{s.t.} \,\,\mathcal{C}(x) \geq \alpha \bigr\} + 1,
  \end{equation*}
  which is always greater or equal to 1.  For any finite value of
  $\alpha$, such a $K$ exists since $\mathcal{C}(x)$ increases to
  $+\infty$ as $x \rightarrow +\infty$ by virtue of properties A3 and
  A5. Additionally, since $\mathcal{C}(\cdot)$ is non-decreasing by
  property A2,
  \begin{align*}
    \Pr\{\cost{X} \geq \alpha\} \geq & \, \Pr\left\{|X| > K - 1\right\}
    \geq  \Pr\{|X| \geq K\} \\
    \geq & \, F(-K) + \left[ 1-F(K) \right].
  \end{align*}
  Hence, for all $F \in \mathcal{P}_A$, we obtain
  \begin{equation*}
    F(-K) + \left[ 1-F(K) \right] \leq \frac{\E{\cost{X}}}{\alpha} \leq 
    \frac{A}{\alpha}.
  \end{equation*}
  
  Therefore, for every $\epsilon >0$, there exists a $K_{\epsilon} >
  0$, namely 
  \begin{equation*}
  K_{\epsilon} = \min\,\bigl\{ x \in [0,\infty)
    \,\,\text{s.t.}  \,\,\mathcal{C}(x) \geq (A/\epsilon) \bigr\} +1,
    \end{equation*}
   such that
  \begin{equation*}
    \sup_{F\in\mathcal{P}_A} \left[ F(-K_{\epsilon})+[1-F(K_{\epsilon})] \right] \leq \epsilon.
  \end{equation*}

  This implies that $\mathcal{P}_A$ is {\em tight}~\cite[III-2, Def.2,
  p.318]{ShiryBook}. By Phrokhorov's Theorem~\cite[III-2, Th.1,
  p.318]{ShiryBook}, $\mathcal{P}_A$ is therefore relatively
  sequentially compact and every sequence $\{F_n\}$ of distribution
  functions in $\mathcal{P}_A$ has a convergent sub-sequence
  $\{F_{n_{j}}\}$ where the limit $F^*$ does not necessarily belong to
  $\mathcal{P}_A$. If we prove that $F^* \in \mathcal{P}_A$, the
  latter will be sequentially compact and hence compact since the
  space is metrizable~\cite[Th.28.2, p.179]{Mun}. In order to show
  that the limiting distribution function $F^*$ is in $\mathcal{P}_A$,
  it must satisfy the cost constraint which is the case. In fact,
  \begin{equation*}
    \int \cost{u} \,dF^*(u) \leq  \lim_{n_{j}\to \infty}\inf \, \int \cost{u} \,dF_{n_{j}} \leq A,
  \end{equation*}
  where the first inequality holds because $\mathcal{C}(|u|)$ is
  lower semi-continuous (property A1), and is bounded from below by
  $\mathcal{C}\left(0\right)$ for all $u \in \Reals$ (property
  A2)~\cite[Th. 4.4.4]{klchung}. In addition, the second inequality is
  valid since the sub-sequence $\{F_{n_{j}}\}$ is in $\mathcal{P}_A$
  and therefore satisfies the cost constraint $\forall \,n_{j}$.
  Finally, $F^* \in \mathcal{P}_A$ and $\mathcal{P}_A$ is compact.
\end{proof}

\subsection*{Properties of the mutual information, $I(\cdot)$}

We prove in what follows the finiteness, concavity and continuity of
$I(\cdot)$ on $\mathcal{P}_A$ through Theorems~\ref{th:mutinf1}
and~\ref{th:mutconc}.

\begin{theorem}
  Whenever conditions A3, A7 and A8 hold, the mutual information
  $I(F)$ between the input and output of
  channel~(\ref{eq:generic_ch1}) is finite for all input distribution
  functions $F$ such that $\E{ \cost{X} }$ is {\em finite\/}.
  \label{th:mutinf1}
\end{theorem}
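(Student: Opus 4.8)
The plan is to show that $I(F)$ is finite by writing it as the difference of two finite differential entropies. Because the noise is independent of the input and the channel law~(\ref{eq:py_x}) is a pure translate of $p_N$, the conditional output entropy given $X=x$ equals $h_N$ for every $x$, so that $h(Y\mid X)=h_N$; this quantity is finite by A7 and A8 via~\cite[Proposition 1]{rioul2011}. Starting from the definition~(\ref{mutdefn}) and splitting $\ln[p_N(y-f(x))/p(y;F)]$ into its numerator and denominator parts, the substitution $u=y-f(x)$ turns the numerator term into $-h_N$, while interchanging the order of integration in the denominator term collapses $\int p_N(y-f(x))\,dF(x)$ to $p(y;F)$ and produces the output differential entropy $h\bigl(p(\cdot;F)\bigr)$. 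Hence $I(F)=h\bigl(p(\cdot;F)\bigr)-h_N$, and everything reduces to proving that the output entropy is finite.

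To apply~\cite[Proposition 1]{rioul2011} to $p(\cdot;F)$ I must verify that this density is uniformly bounded and has a finite logarithmic moment. The bound is immediate: with the normalization $p_N\le 1$ permitted by A7 and output scaling, we have $p(y;F)=\int p_N(y-f(x))\,dF(x)\le 1$ for all $y\in\Reals$.

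The substantive step is the logarithmic moment. From $|Y|\le |f(X)|+|N|$ and the elementary inequality $\ln(1+a+b)\le\ln(1+a)+\ln(1+b)$ for $a,b\ge 0$, I get $\E{\ln(1+|Y|)}\le \E{\ln(1+|f(X)|)}+\E{\ln(1+|N|)}$, so it suffices to bound the two terms on the right separately. For the first, condition A3 asserts $\cost{x}=\omega(\ln|f(x)|)$, so for any fixed $\kappa>0$ there is a threshold beyond which $\ln(1+|f(x)|)\le \ln 2 + \kappa^{-1}\cost{x}$ (using also that $|f(x)|\to\infty$ by A5, so $|f(x)|\ge 1$ eventually); on the complementary bounded region $|f|$ is bounded by continuity (A4), whence $\E{\ln(1+|f(X)|)}<\infty$ because $\E{\cost{X}}$ is finite by hypothesis. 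The identical argument with A8 controls $\ln(1+|N|)$ by $\Ncost{n}$ and yields $\E{\ln(1+|N|)}<\infty$ since $\Ep{N}{\Ncost{N}}=L_N<\infty$. Thus $p(\cdot;F)$ has a finite logarithmic moment, Proposition~1 of~\cite{rioul2011} gives that $h\bigl(p(\cdot;F)\bigr)$ exists and is finite, and therefore $I(F)=h\bigl(p(\cdot;F)\bigr)-h_N$ is finite.

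The main obstacle is precisely this logarithmic-moment estimate: since feasible inputs may have infinite second moment, no polynomial moment of the output is available, and the argument works only because the super-logarithmic growth encoded in A3 and A8 is just strong enough to dominate $\ln(1+|f(X)|)$ and $\ln(1+|N|)$, respectively. A secondary technical point I would treat carefully is the interchange of integration that yields the clean decomposition $I(F)=h\bigl(p(\cdot;F)\bigr)-h_N$; I would justify it by first establishing the finiteness of the output entropy above and invoking the nonnegativity of mutual information, so that no indeterminate $\infty-\infty$ arises.
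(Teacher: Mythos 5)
Your proposal is correct and takes essentially the same route as the paper: both decompose $I(F)=h_Y(F)-h_N$, bound the output density by one (A7 plus the output-scaling normalization), establish the finite logarithmic moment of the output via $\ln\left(1+|Y|\right)\le\ln\left(1+|f(X)|\right)+\ln\left(1+|N|\right)$ using A3 and A8, and invoke Proposition~1 of~\cite{rioul2011} to conclude that both entropies exist and are finite. The extra details you supply --- the explicit threshold argument unpacking the $\omega(\cdot)$ condition and the Fubini justification of the decomposition --- are steps the paper leaves implicit, not deviations from its argument.
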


\begin{proof}
  Since $Y = f(X) + N$,
  \begin{equation*}
    \ln \left[1+|Y|\right] \leq \ln \left[1+|f(X)|\right] + \ln \left[1+|N|\right],
  \end{equation*}
  and $ \E{ \ln \left[1+|Y| \right]}$ is finite because both $\E{ \ln
    \left[1+|f(X)|\right] }$ and $ \E{ \ln \left[1+|N|\right]}$ are
  finite (by properties A3 and A8).

  Moreover, and since $p_Y(y)$ is upperbounded (by A7) by one $h_Y(F) =
  -\int\,p(y;F)\,\ln\,p(y;F)\,dy$, the differential entropy of $Y$, is
  well defined~\cite[Proposition 1]{rioul2011} and $0 \leq h_Y(y) <
  +\infty$.

  The differential entropy $h_N$ of the noise being finite (due to
  properties A7 and A8), the mutual information~(\ref{mutdefn}) can
  therefore be written as the difference of two terms:
  \begin{equation}
    I(F)\,=h_Y(F)\,-\,h_{Y|X}(F)\, = h_Y(F) \, - h_N,
    \label{mutdef}
  \end{equation}
  both of which are finite and this completes the proof.
\end{proof}


\begin{theorem}
  Assume that conditions A2 through A8 hold. Under a cost constraint
  \begin{equation*}
  \int \cost{X}\,dF(x) \leq A \qquad A>0,
  \end{equation*} 
 the mutual information $I(F)$ between the input and the output of
  channel~(\ref{eq:generic_ch1}) is concave and continuous whenever
  $\cost{x} = \omega\left(\ln \left|f(x)\right|\right)$.
  \label{th:mutconc}
\end{theorem}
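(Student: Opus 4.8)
The plan is to prove the two claims—concavity and continuity of $I(F)$—separately, since they rest on different machinery. For \textbf{concavity}, I would use the decomposition $I(F) = h_Y(F) - h_N$ established in Theorem~\ref{th:mutinf1}, valid here because conditions A3, A7, A8 hold. Since $h_N$ is a constant independent of $F$, it suffices to show that the output differential entropy $h_Y(F)$ is concave in $F$. Because the output density $p(y;F) = \int p_N(y-f(x))\,dF(x)$ is \emph{linear} (affine) in $F$, and since differential entropy $h(p) = -\int p\ln p\,dy$ is a concave functional of the density $p$ (the map $p \mapsto -p\ln p$ being concave on $[0,\infty)$), concavity of $h_Y$ in $F$ follows by composing the linear map $F \mapsto p(\cdot;F)$ with the concave entropy functional. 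I would make this precise by taking $F = \lambda F_1 + (1-\lambda)F_2$, noting $p(y;F) = \lambda p(y;F_1) + (1-\lambda)p(y;F_2)$, and applying the pointwise concavity of $-t\ln t$ followed by integration in $y$.

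For \textbf{continuity}, the strategy is to invoke the convergence-of-differential-entropies result, Theorem~\ref{thconv}, applied to the family of output densities. Given a sequence $F_m \to F$ weakly in the Levy metric, the goal is to show $I(F_m) \to I(F)$, which by the decomposition~(\ref{mutdef}) reduces to $h_Y(F_m) \to h_Y(F)$. First I would verify that weak convergence $F_m \to F$, together with continuity of $f$ (A4) and continuity of $p_N$ (A6), forces pointwise convergence of the output densities $p(y;F_m) \to p(y;F)$ for every $y$: since $x \mapsto p_N(y-f(x))$ is a bounded continuous function of $x$, weak convergence of $F_m$ yields convergence of the integrals $\int p_N(y-f(x))\,dF_m(x)$. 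I would then check the two hypotheses of Theorem~\ref{thconv} for the sequence $\{p(\cdot;F_m)\}$ and the limit $p(\cdot;F)$.

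Checking condition C1 is immediate: the output densities inherit a uniform upper bound from the bound on $p_N$ (A7, normalized to one), since $p(y;F) = \int p_N(y-f(x))\,dF(x) \leq \sup_n p_N(n) \leq 1$ uniformly in $y$ and $F$. The crux is \textbf{condition C2}—exhibiting a single non-negative, non-decreasing function $l$ with $l(y) = \omega(\ln y)$ such that $\sup_m \Ep{p(\cdot;F_m)}{l(|Y|)} < \infty$. This is where the hypothesis $\cost{x} = \omega(\ln|f(x)|)$ and the noise moment condition A8 must be combined. The natural candidate is to build $l$ from $\mathcal{C}_N$ (the noise cost) together with the structure of $\mathcal{C}$, using the additive splitting $\ln[1+|Y|] \leq \ln[1+|f(X)|] + \ln[1+|N|]$ to bound the relevant moment of the output by the (uniformly bounded) cost moment $\E{\cost{X}} \leq A$ plus the noise moment $L_N$.

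The \textbf{main obstacle} I anticipate is precisely the construction of the witnessing function $l$ in condition C2 and the uniform control of $\Ep{p(\cdot;F_m)}{l(|Y|)}$ over the whole sequence. One must choose $l$ growing slowly enough that its expectation under the output law is dominated by the input cost constraint (which only controls $\E{\cost{X}}$, hence via A3 only a super-logarithmic-in-$|f(X)|$ quantity) and the noise moment $L_N$, yet fast enough that $l = \omega(\ln y)$ so Theorem~\ref{thconv} applies. Reconciling these two growth requirements—likely by taking $l$ to be a suitable minimum or infimal-convolution-type combination of $\mathcal{C} \circ f^{-1}$ and $\mathcal{C}_N$, exploiting the monotonicity assumptions A2 and A5—is the delicate quantitative step, and it is exactly the place where all eight hypotheses conspire. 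Once C1 and C2 are verified, Theorem~\ref{thconv} delivers $h_Y(F_m) \to h_Y(F)$, and continuity of $I$ follows, completing the proof.
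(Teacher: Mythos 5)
Your skeleton coincides with the paper's own proof: the concavity argument (linearity of $F \mapsto p(\cdot\,;F)$ composed with concavity of $t \mapsto -t\ln t$, then $I(F) = h_Y(F) - h_N$) is identical, and for continuity the paper likewise deduces pointwise convergence of $p(y;F_m)$ from weak convergence tested against the bounded continuous map $x \mapsto p_N(y-f(x))$ (A4, A6, A7), checks C1 with the bound one, and invokes Theorem~\ref{thconv}. The problem is that you defer exactly the step that carries the content of the theorem: you never construct $l$, never verify $l(y) = \omega(\ln y)$, and never prove the uniform bound $\sup_m \Ep{p(\cdot;F_m)}{l(|Y|)} \leq L$; you record it as ``the main obstacle'' with a guessed form. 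Moreover, the tool you propose for it --- re-using the splitting $\ln[1+|Y|] \leq \ln[1+|f(X)|] + \ln[1+|N|]$ from Theorem~\ref{th:mutinf1} --- cannot do this job: that inequality is special to the logarithm (it comes from $1+|a+b| \leq (1+|a|)(1+|b|)$), and a generic super-logarithmic $l$ admits no such subadditive splitting; this is precisely why a logarithmic moment bound on the output is not enough for C2.

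The paper fills the step you left open, and your ``minimum of $\mathcal{C} \circ f^{-1}$ and $\mathcal{C}_N$'' guess is the right object; what your write-up is missing is the mechanism that replaces subadditivity, namely monotonicity plus a case split and a factor of $1/2$. Concretely, for $y \geq |f(0)|$ let $z(y)$ be the largest element of $f^{-1}\left([|f(0)|,y]\right)$ (a generalized inverse of $|f|$, well defined, non-decreasing, with $|f(z(y))| = y$ by A4 and A5), and set $\mathcal{C}_{\min}(y) = \min\{\mathcal{C}(z(y)), \mathcal{C}_N(y)\}$ there and $0$ elsewhere; this function is non-negative, non-decreasing, and $\omega(\ln y)$ by A3 and A8. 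Since $\frac{|Y|}{2} \leq \frac{|f(X)|+|N|}{2} \leq \max\{|f(X)|,|N|\}$, monotonicity gives $\mathcal{C}_{\min}\left(\frac{|Y|}{2}\right) \leq \mathcal{C}_{\min}(|N|) \leq \mathcal{C}_N(|N|)$ when $|f(X)| \leq |N|$, and $\mathcal{C}_{\min}\left(\frac{|Y|}{2}\right) \leq \mathcal{C}_{\min}(|f(X)|) \leq \mathcal{C}(|X|)$ otherwise, so that (both bounds being non-negative)
\begin{equation*}
\Ep{Y}{\mathcal{C}_{\min}\left(\frac{|Y|}{2}\right)} \leq \Ep{N}{\mathcal{C}_N(|N|)} + \Ep{X}{\mathcal{C}(|X|)} \leq L_N + A
\end{equation*}
for every $F \in \mathcal{P}_A$; uniformity over the sequence $\{F_m\}$ is automatic because every $F_m$ obeys the cost constraint, a point you did anticipate. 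Taking $l(y) = \mathcal{C}_{\min}\left(\frac{y}{2}\right)$ then satisfies C2 with $L = L_N + A$, and Theorem~\ref{thconv} completes the argument. In short: right plan, right candidate for $l$, but the proposal stops exactly where the proof begins.
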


Before we proceed with the proof, we note that under the conditions of
the Theorem, the mutual information $I(F)$ between the input and the
output of channel~(\ref{eq:generic_ch1}) is finite by virtue of
Theorem~\ref{th:mutinf1}.

\begin{proof}

  \medskip
  \subsubsection*{Concavity}

  The output differential entropy $h_Y(F)$ is a concave function of
  $F$ on $\mathcal{F}$. In fact,
  \begin{equation*}
    h_Y(F)\,=\,-\int\,p_Y(y;F)\,\ln\,p_Y(y;F)\,dy
  \end{equation*}
  exists (by Theorem~\ref{th:mutinf1}) and is a concave function of
  $p_Y(\cdot)$ because $-x\ln x$ is concave in $x$. Since $p_Y(F)$ is
  linear in $F$, $I(F)\,= h_Y(F)\,-\,h_N$ is concave on $\set{P}_A$.
  
  \medskip
  \subsubsection*{Continuity}

  To prove the continuity of $I(F)$, it suffices to show that $h_Y(F)$
  is continuous by virtue of equation~(\ref{mutdef}). To this end, we
  let $F \in \mathcal{P}_A$ and let $\{F_m\}_{m \geq 1}$ be a sequence
  of probability measures in $\mathcal{P}_A$ that converges weakly to
  $F$. 

  In order to apply Theorem~\ref{thconv} and show the convergence of
  $h_Y(F_m)$ to $h_Y(F)$ and hence the weak continuity of $h_Y(F)$ on
  $\mathcal{P}_A$, we establish that the appropriate conditions are
  satisfied:
  \begin{itemize}
  \item[$\bullet$] By definition of weak convergence, since $p_N(y-x)$
    is bounded and continuous (properties A6 and A7), then $p(y;F_m) =
    \int p_N(y-f(x))\,dF_m(x)$ converges point-wise to $p(y;F) =
    \int p_N(y-f(x))\,dF(x)$.
  \item[$\bullet$] The induced output PDF $p(y;F_m)$ is also bounded
    by one.
  \item[$\bullet$] It remains to find a non-negative and
    non-decreasing function, $l: [0,\infty) \to [0,\infty)$
    such that $l(y) = \omega \left( \ln (y) \right)$, and a scalar $L
    >0$ such that equation~(\ref{eq:lcond}) holds for $p(y;F_m)$, $m
    \geq 1$ and $p(y;F)$, a task which we fulfill in what follows.
  \end{itemize}

  For any $y \geq |f(0)|$, let $\set{S} = f^{-1} \left( [|f(0)|, y]
  \right)$ be the inverse image by $f(\cdot)$ of the closed interval
  $[|f(0)|, y]$. Since $f(\cdot)$ is continuous (A4), the set
  $\set{S}$ is closed. It is also bounded because $|f(x)|$ is
  non-decreasing in $|x|$ and tends to infinity (A5). Therefore any
  element in $\set{S}$ is smaller than a positive $t_u$ such that
  $|f(t_u)| = 2y$ and greater than a negative $t_b$ such that
  $|f(t_b)| = 2y$. Such $t_u$ and $t_b$ exist because $f(\cdot)$ is
  continuous. 

  The set is compact and has a maximal value that we denote $z(y) =
  \max \{z: z \in \set{S}\}$. Note that $|f(z(y))| = y$.

  Define the function $\mathcal{C}_{\min}(\cdot)$: $[0,\infty)
  \longrightarrow \Reals$ as follows:
  \begin{equation*}
    \mathcal{C}_{\min}\left(y\right) = \left\{ \begin{array}{ll}
        \displaystyle \min \left\{\mathcal{C}\left( z(y) \right); \mathcal{C}_N(y) \right\} 
        & y \geq |f(0)| \\ 
        \displaystyle 0  &  \text{otherwise},
      \end{array} \right.
  \end{equation*}
  where $\mathcal{C}_N(\cdot)$ is defined in A8. The function
  $\mathcal{C}_{\min}(y)$ is non-negative and non-decreasing on
  $[0,\infty)$ since both $\mathcal{C}(y)$ and $\mathcal{C}_N(\cdot)$
  are non-negative and non-decreasing by properties A2 and A8 and
  $z(y)$ is non-decreasing for $y \geq |f(0)|$. Additionally,
  $\mathcal{C}_{\min}(y) = \omega\left(\ln y \right)$ because
  $\mathcal{C}(x) = \omega\left(\ln\left|f(x)\right|\right)$ (A3) and
  $\mathcal{C}_N (x) = \omega \left( \ln x \right)$ (A8).

  Now, for any $X$ with distribution $F \in \mathcal{P}_{A}$, 
  \begin{align}
    &\quad \,\,\,\Ep{Y}{\mathcal{C}_{\min}\left[\frac{|Y|}{2}\right]}\nonumber\\
    &= \,  \Ep{X,N}{\mathcal{C}_{\min}\left[\frac{|f(X)+N|}{2}\right]} \nonumber\\
    &\leq \, \Ep{X,N}{ \mathcal{C}_{\min} \left[\frac{|f(X)|+|N|}{2}\right]} \label{nondec0}\\
    &= \, \Ep{X,N}{\mathcal{C}_{\min}\left[ \frac{|f(X)| + |N|}{2}\right] \biggl| {\scriptstyle |f(X)| \leq |N|}}
    \P \left( {\scriptstyle |f(X)| \leq |N|} \right)\nonumber\\
    & \,\, + \Ep{X,N}{\mathcal{C}_{\min}\left[\frac{|f(X)|+|N|}{2}\right] \biggl| {\scriptstyle |f(X)| > |N|}}
    \P \left( {\scriptstyle  |f(X)| > |N|} \right) \nonumber\\
    &\leq \,  \Ep{X,N}{\mathcal{C}_{\min}\left(|N|\right) \biggl|  {\scriptstyle |f(X)| \leq |N|}}
    \P \left(  {\scriptstyle |f(X)| \leq |N|} \right)\nonumber\\
    & \quad + \Ep{X,N}{\mathcal{C}_{\min}\left(|f(X)|\right) \biggl|  {\scriptstyle |f(X)| > |N|}} 
    \P \left(  {\scriptstyle |f(X)| > |N|} \right) \label{nondec1}\\
    &\leq \,  \Ep{N}{\mathcal{C}_{\min}\left(|N|\right)} + \Ep{X}{\mathcal{C}_{\min}\left(|f(X)|\right)} \label{pospos}\\
    &\leq \,   \Ep{N}{\mathcal{C}_{N}\left(|N|\right)} + \Ep{X}{\mathcal{C}\left(|X|\right)}  \nonumber\\
    &\leq \,  L_N + A = L < \infty. \label{thok}
  \end{align} 
  where $0 \leq L_N = \Ep{N}{\mathcal{C}_{N}\left(|N|\right)} <
  \infty$ by property A8. Equations~(\ref{nondec0})
  and~(\ref{nondec1}) are justified since
  $\mathcal{C}_{\min}\left(|x|\right)$ is non-decreasing in $|x|$
  and~(\ref{pospos}) is due to the fact that
  $\mathcal{C}_{\min}\left(|x|\right)$ is non-negative. Since the
  value $0 \leq L < \infty$ is independent of the input distribution
  function $F \in \mathcal{P}_A$, then~(\ref{thok}) holds for any
  output variable $Y$, i.e for all $p(y;F)$ where $F \in
  \mathcal{P}_{A}$. Letting $l(y) =
  \mathcal{C}_{\min}\left(\frac{y}{2}\right)$, $y \in [0,\infty)$,
  then equation~(\ref{eq:lcond}) is satisfied for $p(y;F_m)$, $m \geq
  1$ and $p(y;F)$. Therefore, Theorem~\ref{thconv} holds and
  $h_Y(F_m)$ converges to $h_Y(F)$ and hence $h_Y(F)$ is continuous
  which concludes the proof.
\end{proof}

\section{Extensions}
\label{sc:Ext}

The results may be extended to the case where the noise PDF is not
necessarily continuous on $\Reals$. In fact, we weaken condition A6
and we show that Theorem~\ref{th:main} also holds for noise PDFs which
are piece-wise continuous on a countable number of pieces. Note that
under this category fall absolutely continuous noise variables with a
compact support\footnote[8]{we define the support of a random variable
  as being the set of its points of increase, i.e., the set $\{x \in
  \Reals: \Pr(x-\eta < X < x+\eta)>0 \text{ for all } \eta>0 \}$.}
such as the uniform, and also ones that are one-sided such as the
Gamma or the Pareto random variables. We start by noting the
following:
\begin{itemize}
\item It can be seen from the proof of Theorem~\ref{thconv} that
  almost everywhere (a.e.)\footnote[9]{We say that a property holds
    almost everywhere with respect to a measure $\mu$ and we denote it
    $\mu$-a.e. if and only if the measure by $\mu$ of the set where
    the property fails is equal to zero.} point-wise convergence with
  respect to the Lebesgue measure (in addition to C1 and C2) is
  sufficient in order to have convergence of differential entropies.
\item According to the definition of weak
  convergence~\cite[p. 700]{zap07}, one can replace continuous bounded
  test functions by $F$-a.e. continuous functions where $F$ is the
  limit distribution.
\end{itemize} 


We show now that if $p_N(\cdot)$ has a countable number of
discontinuities then weak convergence of the input distributions
implies Lebesgue-a.e. point-wise convergence of the output PDFs:
Denote by $\{a_i\}_{i \geq 1}$ the countable discontinuities of
$p_N(\cdot)$ and by $\{x_i\}_{i \geq 1}$ the discontinuity points of
$F$, which are necessarily countable (see Jordan decomposition lemma
in~\cite[p. 40]{basu12}). Point-wise convergence of the PDFs holds
except at values of $y$ of the form $y_{ij} = a_i -f(x_j)$, $i,j \geq
1$. The fact that the $\{y_{ij}\}$'s are countable proves our
assertion.


\section{Conclusion}
\label{sc:Conc}

Tangible models for communication channels implicitly assume a finite
value for the channel capacity. Knowing that maximizing the
transmission rates is directly related to a constrained maximization
problem, we have derived sufficient conditions for finiteness and
achievability of the capacity of generic memoryless additive noise
channels. The involved conditions on the input-output relationship,
the input cost function and the type of the noise define a wide
collection of models for which finding codes that strive toward
achieving maximum transmission rates is sensible.  The result is
applicable to possibly non-linear channels, to nearly all the widely
known additive noise models and for cost functions that are
``super-logarithmic''. Interestingly, communications at finite rates is
not directly related to an input average-power constraint. Even when
signaling strategies are allowed to have an infinite second moment on
average, transmission rates could not be arbitrarily large. We mention
that while searching for sufficiency, intermediately we derived
conditions under which point-wise convergence of PDFs implies
convergence of differential entropies.




\bibliographystyle{IEEEtran}
\bibliography{papercor}

\begin{thebibliography}{10}
\providecommand{\url}[1]{#1}
\csname url@rmstyle\endcsname
\providecommand{\newblock}{\relax}
\providecommand{\bibinfo}[2]{#2}
\providecommand\BIBentrySTDinterwordspacing{\spaceskip=0pt\relax}
\providecommand\BIBentryALTinterwordstretchfactor{4}
\providecommand\BIBentryALTinterwordspacing{\spaceskip=\fontdimen2\font plus
\BIBentryALTinterwordstretchfactor\fontdimen3\font minus
  \fontdimen4\font\relax}
\providecommand\BIBforeignlanguage[2]{{%
\expandafter\ifx\csname l@#1\endcsname\relax
\typeout{** WARNING: IEEEtran.bst: No hyphenation pattern has been}%
\typeout{** loaded for the language `#1'. Using the pattern for}%
\typeout{** the default language instead.}%
\else
\language=\csname l@#1\endcsname
\fi
#2}}

\bibitem{Fahs12}
J.~Fahs and I.~Abou-Faycal, ``{On the capacity of additive white alpha-stable
  noise channels},'' in \emph{IEEE International Symposium on Information
  Theory}, Cambridge, MA, USA, 2012, pp. 294--298.

\bibitem{Fahs14-1}
{J. Fahs, I. Abou-Faycal}, ``{A Cauchy input achieves the capacity of a Cauchy
  channel under a logarithmic constraint},'' in \emph{IEEE International
  Symposium on Information Theory}, Honolulu, HI, USA, June 29 - July 4 2014.

\bibitem{ElGhannoudi}
H.~El~Ghannudi, L.~Clavier, N.~Azzaoui, F.~Septier, and P.~a. Rolland,
  ``{Stable interference modeling and Cauchy receiver for an IR-UWB ad hoc
  network},'' \emph{Communications, IEEE Transactions on}, vol.~58, no.~6, pp.
  1748 --1757, Jun. 2010.

\bibitem{marcel}
M.~Nassar, K.~Gulati, A.~Sujeeth, N.~Aghasadeghi, B.~Evans, and K.~Tinsley,
  ``Mitigating near-field interference in laptop embedded wireless
  transceivers,'' in \emph{Acoustics, Speech and Signal Processing, 2008.
  ICASSP 2008. IEEE International Conference on}, 2008, pp. 1405 --1408.

\bibitem{hub}
P.~J. Huber, \emph{Robust Statistics}.\hskip 1em plus 0.5em minus 0.4em\relax
  John Wiley \& Sons, 1981.

\bibitem{Shao}
M.~Shao and C.~Nikias, ``Signal processing with fractional lower order moments:
  stable processes and their applications,'' \emph{Proceedings of the IEEE},
  vol.~81, no.~7, pp. 986 --1010, Jul. 1993.

\bibitem{gon}
{J. G. Gonzalez, D. W. Griffith, and G. R. Arce}, ``Zero-order statistics: A
  signal processing framework for very impulsive processes,'' \emph{Proceedings
  of the IEEE Signal Processing Workshop}, pp. 254--258, 1997.

\bibitem{Sha48_1}
C.~E. Shannon, ``A mathematical theory of communication, part i,'' \emph{Bell
  Syst. Tech. J.}, vol.~27, pp. 379--423, 1948.

\bibitem{Sha48_2}
------, ``A mathematical theory of communication, part ii,'' \emph{Bell Syst.
  Tech. J.}, vol.~27, pp. 623--656, 1948.

\bibitem{Gal68}
R.~Gallager, \emph{Information Theory and Reliable Communication}.\hskip 1em
  plus 0.5em minus 0.4em\relax John Wiley \& Sons, Nov. 1968.

\bibitem{SMITH71}
J.~G. Smith, ``{The information capacity of peak and average power constrained
  scalar Gaussian channels},'' \emph{Inform. Contr.}, vol.~18, pp. 203--219,
  1971.

\bibitem{Luenb}
D.~G. Luenberger, \emph{Optimization By Vector Space Methods}.\hskip 1em plus
  0.5em minus 0.4em\relax New York: Wiley, 1969.

\bibitem{Piera2009}
F.~J. Piera and P.~Parada, ``{On convergence properties of Shannon entropy},''
  \emph{Problems of Information Transmission}, vol.~45, no.~2, pp. 75--94,
  2009.

\bibitem{goda2004}
M.~Godavarti and A.~Hero, ``Convergence of differential entropies,'' \emph{IEEE
  Transactions on Information Theory}, vol.~50, no.~1, pp. 1--6, January 2004.

\bibitem{rioul2011}
O.~Rioul, ``Information theoretic proofs of entropy power inequality,''
  \emph{IEEE Transactions on Information Theory}, vol.~57, no.~1, pp. 33--55,
  January 2011.

\bibitem{kapur}
J.~N. Kapur, \emph{Maximum Entropy Models In Science And Engineering}.\hskip
  1em plus 0.5em minus 0.4em\relax Wiley, 1989.

\bibitem{cover}
{T. M. Cover and J. A. Thomas}, \emph{Elements of Information Theory},
  2nd~ed.\hskip 1em plus 0.5em minus 0.4em\relax Wiley, 2006.

\bibitem{fahsj}
{J. Fahs, I. Abou-Faycal}, ``{Using Hermite bases in studying
  capacity-achieving distributions over AWGN channels},'' \emph{Information
  Theory, IEEE Transactions on}, vol.~58, no.~8, August 2012.

\bibitem{ShiryBook}
A.~N. Shiryaev, \emph{Probability}, 2nd~ed.\hskip 1em plus 0.5em minus
  0.4em\relax Springer-Verlag, 1996.

\bibitem{Mun}
J.~R. Munkres, \emph{Topology}, 2nd~ed.\hskip 1em plus 0.5em minus 0.4em\relax
  Prentice Hall, 2000.

\bibitem{klchung}
K.~L. Chung, \emph{A Course in Probability Theory}, 3rd~ed.\hskip 1em plus
  0.5em minus 0.4em\relax San Diego: Academic Press, 2001.

\bibitem{zap07}
{A. M. Zapala}, ``Unbounded mappings and weak convergence of measures,''
  \emph{Statistics and Probability Letters}, vol.~78, pp. 698--706, 2008.

\bibitem{basu12}
{A. K. Basu}, \emph{Measure Theory and Probability}, 2nd~ed.\hskip 1em plus
  0.5em minus 0.4em\relax PHI Learning, 2012.

\end{thebibliography}

\end{document}